\theoremstyle{plain}
\newtheorem{theorem}{Theorem}[section]
\theoremstyle{plain}
\theoremstyle{definition}
\numberwithin{equation}{section}
\title[]
 {Optimal asset allocation for a DC plan with partial information under inflation and mortality risks }
\begin{document}
\author{Calisto Guambe$^{1,2}$, Rodwell Kufakunesu$^1$, Gusti Van Zyl$^1$, Conrad Beyers$^2$}
\address{$^1$ Department of Mathematics and Applied Mathematics, University of Pretoria, 0002, South Africa }
\address{$^2$ Department of Actuarial Science, University of Pretoria, 0002, South Africa}

\email{calistoguambe@yahoo.com.br, rodwell.kufakunesu@up.ac.za}
\email{gusti.vanzyl@up.ac.za, conrad.beyers@up.ac.za}

\keywords{DC pension plan, Stochastic interest rate, Maximum principle, Stochastic income, inflation risks, mortality risks.}

\begin{abstract}
We study an asset allocation stochastic problem with restriction for a defined-contribution pension plan during the accumulation phase. We consider a financial market with stochastic interest rate, composed of a risk-free asset, a real zero coupon bond price, the inflation-linked bond and the risky asset. A plan member aims to maximize the expected power utility derived from the terminal wealth. In order to protect the rights of a member who dies before retirement, we introduce a clause which allows to withdraw his premiums and the difference is distributed among the survival members. Besides the mortality risk, the fund manager takes into account the salary and the inflation risks. We then obtain closed form solutions for the asset allocation problem using a sufficient maximum principle approach for the problem with partial information. Finally, we give a numerical example
\end{abstract}

\maketitle
\section{Introduction}

Pension funds asset allocation problem has become a very important area of research in recent years. This is motivated by different reasons; for instance, the average age of the employees when they join a pension plan and their expected life time have increased in the last decade. In the area of pension funds, we distinguish two types of pension plans: a Defined Benefit (DB) plan, where the benefits are known in advance and the contributions are adjusted in time to ensure that the fund remains in balance and a Defined Contribution (DC) plan, where the contributions are defined in advance and the benefits depend on the return of the fund, with the risks taken by the plan members. We refer to Antolin {\it et. al.} \cite{antolin} or Devolder {\it et. al.} \cite{devolder} for a thorough discussion on the theory of pension funds. Since most of the developed and developing countries, have moved or are moving from DB to DC plans, where the employee is directly exposed to the financial risks, the study of optimization problems in the context of pension funds it is of great importance. This is because the solution of such problems will help both the pension plan members and the pension fund managers in their allocation of funds in different assets in order to achieve the best retirement savings, even during the periods of market fluctuations or lack of information.

There is a vast of literature dealing with optimization of pension funds problem, for instance, under the expected utility maximization framework, Sun {\it et. al.} \cite{sun2017}, consider a robust portfolio choice for DC pension plan with stochastic income and interest rate. Sun {\it et. al.} \cite{sun2016} study the jump diffusion case of a DC investment plan. Osu {\it et. al.} \cite{osu2017} studied the effect of stochastic extra contribution on DC pension funds, and references therein. This problem has also been considered in the mean variance framework, see, e.g., He and Liang \cite{he2013} and references therein. All the above references solved the DC pension fund problem using a dynamic programming approach under the setting of complete information. Otherwise, Battocchio and Menoncin \cite{battocchio2004} considered a stochastic martingale approach for a DC investment problem. Chen and Delong \cite{chen2015}, studied a DC pension fund problem with regime switching using the techniques of backward stochastic differential equations with quadratic growth.

To the best of our knowledge, in almost all the literature on DC investment problems, the partial information case in the control has not been considered. However, like other investment problems, in the pension fund investment problem, the information about the state control is not always available on time of the decision, which leads to a delayed information about the investment strategy. Thus, one needs to consider the case of DC investments with partial information. We assume that the investment strategy is adapted to a given sub-filtration of the filtration generated by the underlying diffusion processes. Therefore, the dynamic programming approach is not applicable. We use a sufficient maximum principle for such a DC investment problem with stochastic interest rate. In the literature, this method has been widely studied. See, for instance, An and \O ksendal \cite{An-oksendal}, Baghery and \O ksendal \cite{baghery2007}, Framstard {\it et. al.} \cite{framstard} and references therein.

In this paper, we study an asset allocation stochastic problem for a defined-contribution pension plan during the accumulation phase. We consider a financial market with stochastic interest rate, composed by a risk-free asset, a real zero coupon bond price, the inflation-linked bond and the risky asset, where a plan member aims to maximize the expected power utility derived from the terminal wealth. In order to protect the rights of a member who dies before retirement, we introduce a clause which allows the member to withdraw his premiums and the difference is distributed among the survival members. Besides the mortality risk, the fund manager takes into account the salary and the inflation risks. Furthermore, due to the ultimate aim of the pension fund and to prevent the members from losing all their savings, we introduce a restriction in their investment choices. This restriction forces the plan members to put a certain proportion of their savings in a safe investment.

This paper unifies most of the features considered on DC investment problems, such as, stochastic interest rate, inflation risks, mortality risks, stochastic income, etc, and study the optimization problem under partial information case.

The rest of the paper is organized as follows: in Section \ref{sec:modelformulation}, we introduce the setting assumptions of the financial market, namely, the stochastic interest rate, the inflation linked asset, the zero coupon bond price and the risky asset. We also consider the existence of stochastic income and we state the main optimization problem under study. In Section \ref{sec:solutionofpension}, we solve the asset allocation problem of the pension fund manager with partial information using the maximum principle approach presented in the Appendix. Finally, we give a numerical example in Section \ref{sec:numericalexample}.

\section{The model formulation}\label{sec:modelformulation}
Consider three independent Brownian motions $\{W_r(t);W_I(t);W_S(t),\, 0\leq t\leq T\}$ associated to the complete filtered probability space $(\Omega,\mathcal{F},\{\mathcal{F}_t\},\mathbb{P})$. Let a fixed horizon investment of a defined contribution pension fund, with a retirement date denoted by $T<\infty.$ Since we are dealing with a long term investment (pension funds), it is reasonable to consider a stochastic interest rate. Thus, we assume that the interest rate $r(t)$ satisfies the following stochastic differential equation (SDE)
\begin{equation}\label{interestrate}
    dr(t)=a(\bar{r}-r(t))dt+\sigma_rdW_r(t)\,,
\end{equation}
where $a,\,\bar{r}$ and $\sigma_r$ are positive constants, with $a$ representing the level of mean reverting, $\bar{r}$ is the long-run mean of interest rate and $\sigma_r$ the volatility.

Given the stochastic interest rate, we can derive the value of the zero coupon bond in order to hedge the fluctuations of the interest rate. Its price is given by
$$
P(t,T):=\exp\Bigl\{-\int_t^Tr(s)ds\Bigl\}\,.
$$
Applying the It\^o's formula, we obtain:
\begin{equation}\label{couponbond}
    dP(t,T)=P(t,T)\Bigl[\Bigl(r(t)+\frac{\sigma_r}{a}\xi\left(1-e^{-a(T-t)}\right)\Bigl)dt -\frac{\sigma_r}{a}\left(1-e^{-a(T-t)}\right)dW_r(t)\Bigl]\,.
\end{equation}

In order to capture the inflation risks, we also consider an inflation index $I(t)$ given by
$$
dI(t)=I(t)[\mu_I(t)dt+\sigma_I(t)dW_I(t)]\,,
$$
with deterministic expected rate of inflation $\mu_I$ and  volatility $\sigma_I$ satisfying the following integrability condition.
$$
\int_0^T\left[|\mu_I(t)|+\sigma_I^2(t)\right]dt<\infty, \ \ \ \rm{a.s.}
$$
The inflation-linked bond price is defined by
$$
B(t)=I(t)S_0(t)\,,
$$
where $S_0$ is the risk-free asset price. Then,
\begin{equation}\label{inflationbond}
    dB(t)=B(t)[(r(t)+\mu_I(t))dt+\sigma_I(t)dW_I(t)]\,.
\end{equation}

Finally, assume that the pension member also allocate the funds in a risky asset defined by the following geometric diffusion process
$$
dS(t)=S(t)\left[\mu_S(t)dt+\sigma(t)dW_S(t)+\sigma_S(t)dW_r(t)\right]\,,
$$
where the mean rate of return $\mu_S(t):=r(t)+\mu(t)$, the volatilities $\sigma(t),\,\sigma_S(t)$ are deterministic functions, satisfying the following integrability condition
\begin{equation}\label{integrabilityS}
\int_0^T\left[|\mu_S(t)|+\sigma^2(t)+\sigma_S^2(t)\right]dt<\infty, \ \ \ \rm{a.s.}
\end{equation}

We suppose that a pension member has a stochastic income salary driven by:

\begin{equation}\label{incomesalary}
    d\ell(t)=\ell(t)[(\mu_\ell(t)+r(t))dt+\sigma_1(t)dW_r(t)+\sigma_2(t)dW_S(t)]\,,
\end{equation}
where $\mu_\ell(t)+r(t)$ is the expected growth rate of income, $\mu_\ell,\,\sigma_1$ and $\sigma_2$ are deterministic functions also satisfying the integrability condition as in \eqref{integrabilityS}.

Moreover, suppose that the pension member contributes an amount of $\delta\ell(t)$, at time $t$, where $\delta\in(0,1)$ is the proportion of the salary contributed to the pension plan. We assume that the accumulation period of the fund starts from age $t_0>0$ of the member, until the retirement age $t_0+T$. In order to protect the rights of the plan members who die before retirement, we adopt the withdrawal of the premiums for the member who dies, as in He and Liang \cite{he2013}.

Let $M_0$ be the number of members who are still alive in the pension at time $t$. Then, the expected number of members who will die during the time interval $(t,t+\delta t)$ is $M_0P_{t_0+t}$, where $P_{t_0+t}$ is the probability of a person who is alive at the age $t_0+t$ and will die in the following $\delta t$ period. The expected number of members who are actually alive at time $t+\delta t$ is $M_0(1-P_{t_0+t})$, which is a deterministic function of time.

Based on He and Liang \cite{he2013}, we adopt the De Moivre mortality model, i.e., the deterministic force of mortality $\beta_{t_0}(t)=\frac{1}{\tau+t_0-t}$\,. Then,
$$
P_{t_0+t}=\exp\Bigl\{-\int_{t_0+t}^{t_0+s}\beta_{t_0}(u)du\Bigl\}=\frac{\tau-s}{\tau-t}\,, \ \ \ \ s>t\,.
$$

We consider a sub-filtration
$$
\mathcal{E}_t\subseteq\mathcal{F}_t\,, \ \ \ \forall \,t\in[0,T]\,,
$$
where $\mathcal{E}_t$ represents the amount of the information available to the pension manager at time $t$.

Since we are modeling an investment plan for pension funds, we assume that there is a proportion of the pension members wealth restricted to a safe investment (risk-free asset). We denote that fraction by $\kappa$. Let $\pi_1(t),\,\pi_2(t),\,\pi_3(t)$ be the $\{\mathcal{E}_t\}_{t\in[0,T]}$-adapted processes denoting the proportions of the wealth invested in the inflation-linked bond, zero coupon bond and a risky asset respectively. Then $1-\kappa-\pi_1(t)-\pi_2(t)-\pi_3(t)\in\mathcal{E}_t$ is the proportion of the wealth invested in a risk-free asset. The wealth process is then given by
\begin{eqnarray}\nonumber
  dX(t) &=& \Bigl[X(t)\Bigl((1-\kappa) r(t)+\mu_I(t)\pi_1(t)+\frac{\sigma_r}{a}\xi\Bigl(1-e^{-a(T-t)}\Bigl)\pi_2(t)+\mu_S(t)\pi_3(t) +\beta_{t_0}(t)\Bigl) \\ \nonumber
   && +(1-\varepsilon t\beta_{t_0})\delta\ell(t)\Bigl]dt +\pi_1(t)\sigma_I(t)X(t)dW_I(t) \\
   &&   +\Bigl[\pi_3(t)\sigma_S(t)-\frac{\sigma_r}{a}\Bigl(1-e^{-a(T-t)}\Bigl)\pi_2(t)\Bigl]X(t)dW_r(t) +\pi_3(t)\sigma(t)X(t)dW_S(t)\,.
\end{eqnarray}
Here $\varepsilon$ is a parameter with values $0$ or $1$. If $\varepsilon=0$, the pension member obtains nothing during the accumulation phase, while if $\varepsilon=1$, the premiums are returned to the member when he dies.

Suppose that the income salary $\ell(t)$ is given as a numeraire. we define the relative wealth process by $Y(t)=\frac{X(t)}{\ell(t)}$. Then by It\^o's formula, we have

\begin{eqnarray}\label{relativewealth}
 && dY(t) \\ \nonumber
   &=& \Bigl\{Y(t)\Bigl[\mu_I(t)\pi_1(t) +\frac{\sigma_r}{a}\xi\Bigl(1-e^{-a(T-t)}\Bigl)\pi_2(t)+\mu_S(t)\pi_3(t)-\kappa r(t) +\beta_{t_0}(t) -\mu_\ell(t) \\ \nonumber
   && +(\sigma_1^2(t)+\sigma_2^2(t))-\pi_3(t)\sigma_S(t)\sigma_2(t) - \Bigl(\pi_3(t)\sigma_S(t)-\frac{\sigma_r}{a}\Bigl(1-e^{-a(T-t)}\Bigl)\pi_2(t)\Bigl)\sigma_1(t)\Bigl] \\ \nonumber
   &&  -(1-\varepsilon t\beta_{t_0})\delta\Bigl\}dt +\pi_1(t)\sigma_I(t)Y(t)dW_I(t) +(\pi_3(t)\sigma(t)-\sigma_2(t))Y(t)dW_S(t) \\ \nonumber
   && +\Bigl[\pi_3(t)\sigma_S(t)-\frac{\sigma_r}{a}\Bigl(1-e^{-a(T-t)}\Bigl)\pi_2(t)-\sigma_1(t)\Bigl]Y(t)dW_r(t)\,.
\end{eqnarray}

Define $\mathcal{A}:=\{(\pi_1,\pi_2,\pi_3):=(\pi_1(t),\pi_2(t),\pi_3(t))_{t\in[0,T]}\}$ as a set of admissible strategies if $(\pi_1(t),\pi_2(t),\pi_3(t))\in\{\mathcal{E}_t\}_{t\in[0,T]}$ and the SDE \eqref{relativewealth} has a unique strong solution such that $X(t)\geq0$, $\mathbb{P}$-a.s.

Let $U:(0,\infty)\mapsto\mathbb{R}$ be the utility function measuring the investor's preference. The main objective of the pension fund manager is to maximize the following functional:
\begin{equation*}
    \mathcal{J}(t,r,y,\pi_1,\pi_2,\pi_3)=\mathbb{E}_{t,x,r}[U(Y(T))]\,.
\end{equation*}
Then, the value function of the pension manager is given by
\begin{equation}\label{valueproblem}
V(t,r,y)=\sup_{(\pi_1,\pi_2,\pi_3)\in\mathcal{A}}\mathcal{J}(t,r,y,\pi_1,\pi_2,\pi_3)\,.
\end{equation}

\section{Solution of the pension fund manager optimization problem} \label{sec:solutionofpension}

Since we consider an asset allocation problem with partial information, the classical dynamic programming approach applied, for instance, in Battocchio and Menoncin \cite{battocchio2004}, Federico \cite{federico2008}, Di Giacinto {\it et. al.} \cite{di2011}, Sun {\it et. al.} \cite{sun2017} is not applicable.

Applying a sufficient maximum principle approach for diffusion stochastic volatility model with partial information (see the results in the Appendix), we define the Hamiltonian $\mathcal{H}:[0,T]\times \mathbb{R}\times\mathbb{R}\times(0,1)^3\times\mathbb{R}\times\mathbb{R}\times\mathbb{R}\times\mathbb{R} \times\mathbb{R}\times\mathbb{R}\rightarrow\mathbb{R}$ by:

\begin{eqnarray}\label{hamiltonian}
  && \mathcal{H}(t,r(t),Y(t),\pi_1(t),\pi_2(t),\pi_3(t),A_1(t),A_2(t),B_1(t),B_2(t), B_3(t),B_4(t))  \\ \nonumber
  &=& \Bigl\{Y(t)\Bigl[\mu_I(t)\pi_1(t) +\frac{\sigma_r}{a}\xi\Bigl(1-e^{-a(T-t)}\Bigl)\pi_2(t)+\mu_S(t)\pi_3(t)-\kappa r(t) +\beta_{t_0}(t) -\mu_\ell(t)   \\ \nonumber
  && +\Bigl(\pi_3(t)\sigma_S(t)-\frac{\sigma_r}{a}\Bigl(1-e^{-a(T-t)}\Bigl)\pi_2(t)\Bigl)\sigma_1(t)\Bigl] -(1-\varepsilon t\beta_{t_0})\delta\Bigl\}A_1(t) \\ \nonumber
   && + a(\bar{r}-r(t))A_2(t) +\pi_1(t)\sigma_I(t)Y(t)B_1(t) +(\pi_3(t)\sigma(t)-\sigma_2(t))Y(t)B_3(t) \\ \nonumber
  && + \Bigl[\pi_3(t)\sigma_S(t)-\frac{\sigma_r}{a}\Bigl(1-e^{-a(T-t)}\Bigl)\pi_2(t)-\sigma_1(t)\Bigl]Y(t)B_2(t) +\sigma_rB_4(t)\,.
\end{eqnarray}
The adjoint equations corresponding to the admissible strategy $(\pi_1,\pi_2,\pi_3)$ are given by the following backward stochastic differential equations

\begin{eqnarray}\nonumber
  dA_1(t) &=& -\Bigl\{\mu_I(t)\pi_1(t) +\frac{\sigma_r}{a}\xi\Bigl(1-e^{-a(T-t)}\Bigl)\pi_2(t)+\mu_S(t)\pi_3(t)-\kappa r(t)  \\ \nonumber
   && +\beta_{t_0}(t) -\mu_\ell(t) +\sigma_1^2(t) +\sigma_2^2(t)-\pi_3(t)\sigma_S(t)\sigma_2(t) \\ \nonumber
&& - \Bigl(\pi_3(t)\sigma_S(t)-\frac{\sigma_r}{a}\Bigl(1-e^{-a(T-t)}\Bigl)\pi_2(t)\Bigl)\sigma_1(t)\Bigl]A_1(t) \\ \nonumber
   &&  +\pi_1(t)\sigma_I(t)B_1(t) +(\pi_3(t)\sigma(t)-\sigma_2(t))B_3(t) \\ \nonumber
   && + \Bigl[\pi_3(t)\sigma_S(t)-\frac{\sigma_r}{a}\Bigl(1-e^{-a(T-t)}\Bigl)\pi_2(t)-\sigma_1(t)\Bigl]B_2(t)\Bigl\}dt \\ \label{adjointa1}
   && +B_1(t)dW_r(t)+B_2(t)dW_I(t)+B_3(t)dW_S(t) \\ \nonumber
   A_1(T) &=& U'(Y(T))
\end{eqnarray}
and

\begin{eqnarray}\label{adjointa2}
  dA_2(t) &=& [\kappa Y(t)A_1(t)+aA_2(t)]dt +B_4(t)dW_r(t) +B_5(t)dW_I(t) +B_6(t)dW_S(t) \\ \nonumber
  A_2(T) &=& 0\,.
\end{eqnarray}

Applying the first order conditions of optimality to the Hamiltonian with respect to $(\pi_1,\pi_2,\pi_3)$, given the information available $\{\mathcal{E}_t\}_{t\in[0,T]}$, we have the following equations

\begin{equation}\label{firstorderconditions}
\left\{
  \begin{array}{ll}
    \mu_I(t)\mathbb{E}[A_1^*(t)\mid\mathcal{E}_t] +\sigma_I(t)\mathbb{E}[B_1^*(t)\mid\mathcal{E}_t] \,=\, 0, &  \\
    (\xi+\sigma_1(t))\mathbb{E}[A_1^*(t)\mid\mathcal{E}_t]-\mathbb{E}[B_2^*(t)\mid\mathcal{E}_t] \,=\, 0, &  \\
    (\mu_S(t)-\sigma_S(t)(\sigma_1(t)+\sigma_2(t)))\mathbb{E}[A_1^*(t)\mid\mathcal{E}_t] +\sigma_S(t)\mathbb{E}[B_2^*(t)\mid\mathcal{E}_t]+\sigma(t)\mathbb{E}[B_3^*(t)\mid\mathcal{E}_t] \,=\, 0, &
  \end{array}
\right.
\end{equation}
where $A_1^*,B_1^*,B_2^*$ and $B_3^*$ are the adjoint processes corresponding to the optimal controls $(\pi_1^*,\pi_2^*,\pi_3^*)$. For this optimal controls, the adjoint equations become

\begin{eqnarray}\nonumber
  dA_1^*(t) &=& -\Bigl\{[\beta_{t_0}(t)-\kappa r(t) -\mu_\ell(t) +\sigma_1^2(t) +\sigma_2^2(t)]A_1^*(t) -\sigma_1(t)B_2^*(t) -\sigma_2(t)B_3^*(t)\Bigl\}dt \\  \label{adjointa1}
   && +B_1^*(t)dW_r(t)+B_2^*(t)dW_I(t)+B_3^*(t)dW_S(t) \\ \nonumber
   A_1^*(T) &=& U'(Y(T))
\end{eqnarray}
and

\begin{eqnarray}\label{adjointa2}
  dA_2^*(t) &=& [\kappa Y(t)A_1^*(t)+aA_2^*(t)]dt +B_4^*(t)dW_r(t) +B_5^*(t)dW_I(t) +B_6^*(t)dW_S(t) \\ \nonumber
  A_2^*(T) &=& 0\,.
\end{eqnarray}

In order to solve our optimization problem, we consider a power utility function of the form $U(y)=\frac{y^\alpha}{\alpha}$, where $\alpha\in(-\infty,1)\setminus\{0\}$. Then the terminal condition for the first adjoint equation becomes $A_1(T)=Y(T)^{\alpha-1}$. From this form, we try the solution of the BSDE \eqref{adjointa1} to be of the form
\begin{equation}\label{optimala1}
A_1^*(t)=(Y(t))^{\alpha-1}e^{\varphi(t)+\phi(t)r}\,, \ \ \ \varphi(T)=\phi(T)=0\,.
\end{equation}
Applying the It\^o's formula, we have
\begin{eqnarray*}
 && \frac{dA_1^*(t)}{A_1^*(t)} \\
  &=& \Bigl\{\varphi'(t) +\phi'(t)r+a(\bar{r}-r)\phi(t) +\frac{1}{2}\sigma_r^2(\phi(t))^2 \\
  && +\frac{1}{2}(\alpha-1)\sigma_r\Bigl[\pi_3(t)\sigma_S(t)-\frac{\sigma_r}{a}\Bigl(1-e^{-a(T-t)}\Bigl)\pi_2(t)-\sigma_1(t)\Bigl]\phi(t) \\
  &&  +(\alpha-1)\Bigl[\mu_I(t)\pi_1^*(t)+\mu_S(t)\pi_3^*(t) +\frac{\sigma_r}{a}\xi\Bigl(1-e^{-a(T-t)}\Bigl)\pi_2^*(t) \\
   && -\kappa r(t) +\beta_{t_0}(t)-\mu_\ell(t)+\sigma_1^2(t)+\sigma_2^2(t)-\sigma_S(t)\sigma_2(t)\pi_3^*(t) +\frac{1}{2}(\alpha-2)(\pi_1^*(t))^2\sigma_I^2(t)  \\
   &&   -\Bigl(\sigma_S(t)\pi_3^*(t)-\frac{\sigma_r}{a}\xi\Bigl(1-e^{-a(T-t)}\Bigl)\pi_2^*(t)\Bigl)\sigma_1(t) +\frac{1}{2}(\alpha-2)(\sigma_S(t)\pi_3^*(t)-\sigma_2(t))^2 \\
   && +\frac{1}{2}(\alpha-2)\Bigl(\sigma_S(t)\pi_3^*(t)-\frac{\sigma_r}{a}\xi\Bigl(1-e^{-a(T-t)}\Bigl)\pi_2^*(t)-\sigma_1(t)\Bigl)^2 \\
   && -(1-\varepsilon t\beta_{t_0}(t))\delta (y(t))^{-1}\Bigl]\Bigl\}dt +(\alpha-1)\pi_1^*(t)\sigma_I(t)dW_I(t) \\
   && +\Bigl[(\alpha-1) \Bigl(\sigma_S(t)\pi_3^*(t)-\frac{\sigma_r}{a}\xi\Bigl(1-e^{-a(T-t)}\Bigl)\pi_2^*(t)-\sigma_1(t)\Bigl)+\sigma_r\phi(t)\Bigl]dW_r(t) \\
   && +(\alpha-1)(\sigma_S(t)\pi_3^*(t)-\sigma_2(t))dW_S(t)\,.
\end{eqnarray*}
Comparing with the adjoint equation \eqref{adjointa1}, we obtain the following relations
\begin{eqnarray}\label{optimalb1}
  B_1^*(t) &=& \Bigl[(\alpha-1) \Bigl(\sigma_S(t)\pi_3^*(t)-\frac{\sigma_r}{a}\xi\Bigl(1-e^{-a(T-t)}\Bigl) \pi_2^*(t)-\sigma_1(t)\Bigl) \\ \nonumber
  && \ \ \ \ \ \ \ +\sigma_r\phi(t)\Bigl]A_1^*(t)\,; \\ \label{optimalb2}
  B_2^*(t) &=& (\alpha-1)\pi_1^*(t)\sigma_I(t)A_1^*(t)\,; \\ \label{optimalb3}
  B_3^*(t) &=&  (\alpha-1)(\sigma_S(t)\pi_3^*(t)-\sigma_2(t))A_1^*(t)\,.
\end{eqnarray}
Moreover, $\varphi(t)$ and $\phi(t)$ solve the following backward ordinary differential equation

\begin{eqnarray}\label{backwardode}
    && \Bigl\{\varphi'(t)+r\phi'(t) +\frac{1}{2}\sigma_r^2(\phi(t))^2 +K(t) \\ \nonumber
&& + \Bigl[a(\bar{r}-r)+ \frac{1}{2}(\alpha-1)\sigma_r\Bigl(\pi_3(t)\sigma_S(t)-\frac{\sigma_r}{a}\Bigl(1-e^{-a(T-t)}\Bigl)\pi_2(t) -\sigma_1(t)\Bigl)\Bigl]\phi(t)\Bigl\}A_1^*(t) \\ \nonumber
&=&    Q(t)A_1^*(t)+\sigma_1(t)B_2^*(t)+\sigma_2(t)B_3^*(t)\,,
\end{eqnarray}
where
\begin{eqnarray*}
 && K(t) \\
&=& (\alpha-1)\Bigl[\mu_I(t)\pi_1^*(t)+\mu(t)\pi_3^*(t) +\frac{\sigma_r}{a}\xi\Bigl(1-e^{-a(T-t)}\Bigl)\pi_2^*(t) -\kappa r(t) +\beta_{t_0}(t)-\mu_\ell(t) \\
   && +\sigma_1^2(t)+\sigma_2^2(t)-\sigma_S(t)\sigma_2(t)\pi_3^*(t) +\frac{1}{2}(\alpha-2)(\pi_1^*(t))^2\sigma_I^2(t)  \\
   &&   -\Bigl(\sigma_S(t)\pi_3^*(t)-\frac{\sigma_r}{a}\xi\Bigl(1-e^{-a(T-t)}\Bigl)\pi_2^*(t)\Bigl)\sigma_1(t) +\frac{1}{2}(\alpha-2)(\sigma_S(t)\pi_3^*(t)-\sigma_2(t))^2 \\
   && +\frac{1}{2}(\alpha-2)\Bigl(\sigma_S(t)\pi_3^*(t)-\frac{\sigma_r}{a}\xi\Bigl(1-e^{-a(T-t)}\Bigl)\pi_2^*(t)-\sigma_1(t)\Bigl)^2 -(1-\varepsilon t\beta_{t_0}(t))\delta (y(t))^{-1}\Bigl]
\end{eqnarray*}
and
\begin{equation*}
    Q(t)= \kappa r(t) +\mu_\ell(t)-\beta_{t_0}(t) -\sigma_1^2(t) -\sigma_2^2(t)\,.
\end{equation*}
Substituting \eqref{optimala1}, \eqref{optimalb1}--\eqref{optimalb3} into \eqref{firstorderconditions}, we obtain the following optimal solutions:

\begin{eqnarray}\label{optimalpi1}
    \pi_1^*(t) &=& \frac{\xi+\sigma_1(t)}{(\alpha-1)\sigma_I(t)}\,; \\ \label{optimalpi2}
\pi_2^*(t) &=& \frac{2a}{(\alpha-1)\sigma_r\sigma_I(t)\xi\Bigl(1-e^{-a(T-t)}\Bigl)} \Bigl[\sigma_r\sigma_I(t)\phi(t)+\mu_I(t) \\ \nonumber
&& (\alpha-1)\sigma_I(t)(\sigma_S(t)\pi_3^*(t)-\sigma_1(t))\Bigl]\,; \\ \label{optimalpi3}
\pi_3^*(t) &=& \frac{\mu_S(t)-\sigma_S(t)\sigma_2(t)+\xi\sigma_S(t)+ (1-\alpha)\sigma(t)\sigma_2(t)}{(1-\alpha)\sigma(t)\sigma_S(t)}\,.
\end{eqnarray}

We point out that from \eqref{optimala1}, \eqref{optimalb2} and \eqref{optimalb3}, we can write \eqref{backwardode} in the following system

$$
\left\{
  \begin{array}{ll}
    \varphi'(t)+\mathcal{K}(t) & \hbox{=\,\,0\,;} \\
    r\phi'(t)+M(t)\phi(t)+\frac{1}{2}\sigma_r^2(\phi(t))^2 & \hbox{=\,\,0\,,}
  \end{array}
\right.
$$
where
\begin{eqnarray*}
    \mathcal{K}(t) &=& K(t)-[Q(t)+(\alpha-1)(\pi_1^*(t)\sigma_1(t)\sigma_I(t) +\sigma_2(t)(\pi_3^*(t)\sigma_S(t)-\sigma_2(t)))]\,, \\
    M(t) &=& a(\bar{r}-r)+ \frac{1}{2}(\alpha-1)\sigma_r\Bigl(\pi^*_3(t)\sigma_S(t)-\frac{\sigma_r}{a}\Bigl(1-e^{-a(T-t)}\Bigl)\pi^*_2(t)-\sigma_1(t)\Bigl)
\end{eqnarray*}
which give the following solutions

\begin{equation*}
    \varphi(t)=-\int_t^T\mathcal{K}(s)ds\,, \ \ \ \ t\in[0,T]\,,
\end{equation*}
and
\begin{equation*}
    \phi(t)=\frac{1}{2}\sigma_r^2\exp\Bigl\{-\int_t^T(r(s))^{-1}M(s)ds\Bigl\}\int_t^T(r(s))^{-1}ds\,.
\end{equation*}
This completes the solution \eqref{optimala1}.

For the second adjoint equation, we have that from the solution of $A_1^*(t)$ in \eqref{optimala1}, we can write
$$
dA_2^*(t)=\Bigl[\kappa h(t)(y(t))^{\alpha}+aA_2^*(t)\Bigl]dt+B_4^*(t)dW_r(t)+B_5^*(t)dW_I(t)+B_6^*(t)dW_S(t)\,,
$$
which is a linear BSDE. From Cohen and Elliott \cite{cohen}, Theorem 19.2.2., $A_2^*(t)$ is given by
$$
A_2^*(t)=-\mathbb{E}\Bigl[\kappa\int_t^Th(s)(y(s))^{\alpha}ds\mid\mathcal{E}_t\Bigl]\,.
$$
The controls $B_4^*,\,B_5^*,\,B_6^*$ can be obtained using the martingale representation theorem.

We then conclude this section summarizing our results in the following theorem.

\begin{theorem}
Under the power utility function, the optimal strategies for a defined contribution problem \eqref{valueproblem}, based on the information flow $\{\mathcal{E}_t\}_{t\in[0,T]}$, are given by

\begin{eqnarray*}
    \pi_1^*(t) &=& \frac{\xi+\sigma_1(t)}{(\alpha-1)\sigma_I(t)}\,; \\
\pi_2^*(t) &=& \frac{2a}{(\alpha-1)\sigma_r\sigma_I(t)\xi\Bigl(1-e^{-a(T-t)}\Bigl)} \Bigl[\sigma_r\sigma_I(t)\phi(t)+\mu_I(t) \\ \nonumber
&& (\alpha-1)\sigma_I(t)(\sigma_S(t)\pi_3^*(t)-\sigma_1(t))\Bigl]\,; \\
\pi_3^*(t) &=& \frac{\mu_S(t)-\sigma_S(t)\sigma_2(t)+\xi\sigma_S(t)+ (1-\alpha)\sigma(t)\sigma_2(t)}{(1-\alpha)\sigma(t)\sigma_S(t)}\,.
\end{eqnarray*}
where
\begin{equation*}
    \phi(t)=\frac{1}{2}\sigma_r^2\exp\Bigl\{-\int_t^T(r(s))^{-1}M(s)ds\Bigl\}\int_t^T(r(s))^{-1}ds\,.
\end{equation*}
\end{theorem}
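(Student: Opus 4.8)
The statement simply collects the computations carried out above, and the plan is to present them as a single application of the sufficient stochastic maximum principle with partial information recalled in the Appendix: $Y$ from \eqref{relativewealth} plays the role of the controlled state, $r$ from \eqref{interestrate} that of the uncontrolled state, $U(y)=y^{\alpha}/\alpha$ is the terminal reward, and $\mathcal{H}$ of \eqref{hamiltonian} the Hamiltonian. The argument splits into three parts: (i) exhibit adjoint processes solving \eqref{adjointa1}--\eqref{adjointa2} for which the first-order conditions \eqref{firstorderconditions} hold and which pin down the feedback controls; (ii) integrate the resulting backward equations explicitly; (iii) verify the admissibility and concavity/optimality hypotheses needed to let the maximum principle conclude.

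For (i) and (ii) I would carry out the derivation preceding the theorem. The terminal datum $A_1(T)=Y(T)^{\alpha-1}$ suggests the exponential-affine ansatz \eqref{optimala1}, $A_1^*(t)=Y(t)^{\alpha-1}e^{\varphi(t)+\phi(t)r}$ with $\varphi(T)=\phi(T)=0$; applying It\^o's formula along \eqref{relativewealth}--\eqref{interestrate} and matching the outcome with \eqref{adjointa1} yields the representations \eqref{optimalb1}--\eqref{optimalb3} of $B_1^*,B_2^*,B_3^*$ from the diffusion coefficients and the scalar backward equation \eqref{backwardode} from the drift. Since $A_1^*>0$, one may divide \eqref{firstorderconditions} by $\mathbb{E}[A_1^*(t)\mid\mathcal{E}_t]$; inserting \eqref{optimalb1}--\eqref{optimalb3} then turns it into a linear system for $(\pi_1^*,\pi_2^*,\pi_3^*)$ whose solution is \eqref{optimalpi1}--\eqref{optimalpi3}. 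Substituting these controls back into \eqref{backwardode} and separating the part independent of $r$ from the part linear in $r$ decouples it into the two backward equations for $\varphi$ and $\phi$ displayed after \eqref{backwardode}, which integrate to $\varphi(t)=-\int_t^T\mathcal{K}(s)\,ds$ and to the stated formula for $\phi$. The second adjoint equation \eqref{adjointa2} becomes a linear BSDE once $A_1^*$ is known, with solution the conditional-expectation formula of \cite[Theorem 19.2.2]{cohen} and with $B_4^*,B_5^*,B_6^*$ obtained by martingale representation; since $A_2^*$ is absent from \eqref{firstorderconditions}, it affects nothing beyond the transversality condition $A_2^*(T)=0$.

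Part (iii) is where the real work lies, and it is the step I expect to be the main obstacle. One must check: (a) admissibility of $(\pi_1^*,\pi_2^*,\pi_3^*)$, meaning $\{\mathcal{E}_t\}$-measurability and a unique nonnegative strong solution of the closed-loop equation \eqref{relativewealth}; the measurability is transparent for $\pi_1^*$ and $\pi_3^*$, but for $\pi_2^*$ it hinges on $\phi(t)$, which as written involves the forward integral $\int_t^T r(s)^{-1}\,ds$, so one must either restrict to information flows for which $\phi(t)$ is $\mathcal{E}_t$-measurable or interpret the $\phi$-equation so that its solution is adapted; (b) the integrability of the state, the income, and the adjoint processes needed for the BSDEs \eqref{adjointa1}--\eqref{adjointa2} to be well posed and for the terminal and transversality terms to be integrable, which is precisely where the standing integrability conditions on $\mu_I,\sigma_I,\mu_S,\sigma,\sigma_S,\mu_\ell,\sigma_1,\sigma_2$ and the Gaussian law of the Vasicek process $r$ are used, and where one also needs the Riccati term $\frac{1}{2}\sigma_r^2\phi^2$ not to blow up on $[0,T]$; and (c) the concavity hypothesis. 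This last point is the delicate one: $\mathcal{H}$ is not jointly concave in $(y,\pi_1,\pi_2,\pi_3)$ because of the products $y\pi_i$, so the plain Mangasarian condition does not apply verbatim; it can instead be handled by taking the amounts invested $\pi_i Y$ as controls, which restores affine (hence concave) dependence, or by checking an Arrow-type generalized-concavity condition for the maximized Hamiltonian along the candidate trajectory. Once (a)--(c) are secured, the sufficient maximum principle gives that $(\pi_1^*,\pi_2^*,\pi_3^*)$ attains the supremum in \eqref{valueproblem}, which is the assertion.
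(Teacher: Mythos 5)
Your parts (i) and (ii) reproduce exactly the paper's argument: the exponential--affine ansatz \eqref{optimala1}, It\^o's formula and coefficient matching to obtain \eqref{optimalb1}--\eqref{optimalb3} and \eqref{backwardode}, substitution into the first-order conditions \eqref{firstorderconditions} to get \eqref{optimalpi1}--\eqref{optimalpi3}, decoupling of the backward equations for $\varphi$ and $\phi$, and the linear-BSDE/conditional-expectation treatment of $A_2^*$. The verification issues you single out in part (iii) --- $\mathcal{E}_t$-measurability of $\pi_2^*$ given that $\phi(t)$ involves the forward integral of $r$, the lack of joint concavity of the Hamiltonian in $(y,\pi_1,\pi_2,\pi_3)$, and the integrability hypotheses of the appendix theorem --- are simply not checked in the paper, so your outline follows the same route as the paper's own proof while being more candid about the step it leaves open.
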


\section{Numerical example}\label{sec:numericalexample}

In this section, we consider a numerical application of our results, in order to show the behavior of the optimal portfolio strategy derived in the previous section. We assume the following parameters consistent with the numerical analysis in Battocchio and Menoncin \cite{battocchio2004}. The graphs below show that the fund manager should completely not invest in the inflation-linked asset, as that attracts negative interest rate. This is in line with the literature, see, e.g., \cite{beletski2006}, \cite{Basimanebotlhe}. The allocation in the stock follows the behavior of the interest rate, which means that more wealth is invested in the stock as the interest rate increases. For the zero-coupon bond, the graph bellow suggests that a small proportion of the wealth should be invested in this asset, along the life time of the investment.

$$\left(
  \begin{array}{cccccccc}
    a & \bar{r} & \sigma_r & r(0) & T & \xi & \mu_I & \sigma_I \\
    0.2 & 0.05 & 0.02 & 0.03 & 20 & 0.15 & -0.01 & 0.015 \\
    \mu & \sigma & \sigma_S & \mu_\ell & \sigma_1 & \sigma_2 & \ell(0) & \delta \\
    0.06 & 0.19 & 0.06 & 0.01 & 0.014 & 0.171 & 100 & 0.12 \\
  \end{array}
\right)
$$

\begin{figure}[h]
  \includegraphics[width=8cm]{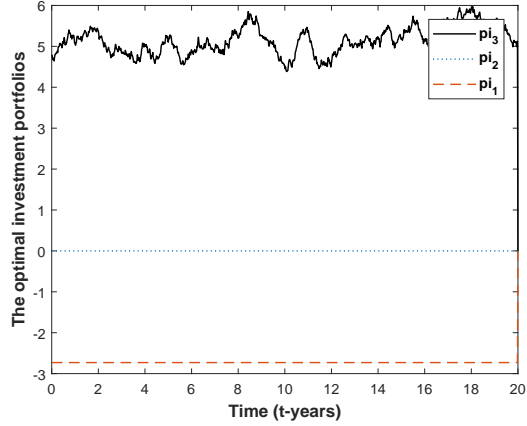}\\
  \caption{The graph shows how the fund manager should pursue with the portfolio allocation for $\alpha=-3$.}\label{portfolio}
\end{figure}

\begin{figure}[h]
  \includegraphics[width=8cm]{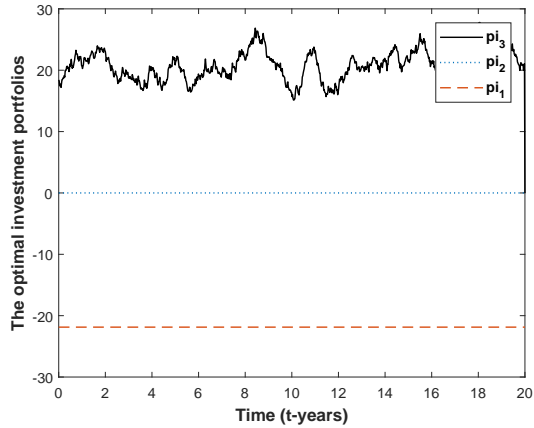}\\
  \caption{The graph shows how the fund manager should pursue with the portfolio allocation for $\alpha=0.5$.}\label{portfolio1}
\end{figure}

\section*{Appendix}
We introduce a version of a maximum principle approach for stochastic volatility model under diffusion with partial information, which is mainly based on the results in Guambe and Kufakunesu \cite{guambe}.
On a complete filtered probability space $(\Omega,\mathcal{F},\{\mathcal{F}_t\}_{t\in[0,T]},\mathbb{P})$, suppose that the dynamics of the state process is given by the following stochastic differential equation (SDE)
\begin{eqnarray}\label{stateprocess}
  dX(t) &=& b(t,X(t),Y(t),\pi(t))dt+ \sigma(t,X(t),Y(t),\pi(t))dW_1(t)  \\ \nonumber
   && +\beta(t,X(t),Y(t),\pi(t))dW_2(t)\,; \\ \nonumber
  X(0) &=& x\in\mathbb{R}\,,
\end{eqnarray}
where the external economic factor $Y$ is given by
\begin{equation}\label{externalgeneral}
    dY(t)=\varphi(Y(t))dt+\phi(Y(t))dW_2(t)\,.
\end{equation}

We assume that the functions $b,\sigma,\beta:[0,T]\times\mathbb{R}\times\mathbb{R}\times\mathcal{A}\rightarrow\mathbb{R}$; $\varphi,\phi:\mathbb{R}\rightarrow\mathbb{R}$ are given predictable processes, such that \eqref{stateprocess} and \eqref{externalgeneral} are well defined and \eqref{stateprocess} has a unique solution for each $\pi\in\mathcal{A}$. Here, $\mathcal{A}$ is a given closed set in $\mathbb{R}$. We assume that the control process $\pi$ is adapted to a given filtration $\{\mathcal{E}_t\}_{t\in[0,T]}$, where
$$
\mathcal{E}_t\subseteq\mathcal{F}_t\,, \ \ \ \forall \,t\in[0,T]\,.
$$
The sub-filtration $\{\mathcal{E}_t\}_{t\in[0,T]}$ denotes the amount of the information available to the controller at time $t$ about the state of the system\,.

Let $f:[0,T]\times\mathbb{R}\times\mathbb{R}\times\mathcal{A}\rightarrow\mathbb{R}$ be a continuous function and $g:\mathbb{R}\times\mathbb{R}\rightarrow\mathbb{R}$ a concave function. We define the performance criterion by
\begin{equation}\label{functionalgeneral}
    \mathcal{J}(t)=\mathbb{E}\Bigl[\int_0^Tf(t,X(t),Y(t),\pi(t))dt+g(X(T),Y(T))\Bigl]\,.
\end{equation}
We say that $\pi\in\mathcal{A}$ is an admissible strategy if \eqref{stateprocess} has a unique strong solution and
$$
\mathbb{E}\Bigl[\int_0^T|f(t,X(t),Y(t),\pi(t))|dt+|g(X(T),Y(T))|\Bigl]<\infty\,.
$$
The partial information control problem is to find $\pi^*\in\mathcal{A}$ such that
$$
\mathcal{J}(\pi^*)=\sup_{\pi\in\mathcal{A}}\mathcal{J}(\pi)\,.
$$
The control $\pi^*$ is called an optimal control if it exists.

In order to solve this stochastic optimal control problem with stochastic volatility, we use the so called maximum principle approach. The beauty of this method is that it solves a stochastic control problem in a more general situation, that is, for both Markovian and non-Markovian cases. We point out that, due to the nature of the partial information $\{\mathcal{E}\}_{t\in[0,T]}$, the dynamic programming approach for a stochastic volatility model by Pham \cite{pham} is not applicable. Our approach may be considered as an extension of the maximum principle approach for a stochastic control problem with partial information in Baghery and \O ksendal  \cite{baghery2007} to the stochastic volatility case.

We define the Hamiltonian $\mathcal{H}:[0,T]\times \mathbb{R}\times\mathbb{R}\times\mathbb{A}\times\mathbb{R}\times\mathbb{R}\times\mathbb{R}\times\mathbb{R} \rightarrow\mathbb{R}$ by:

\begin{eqnarray}\label{hamiltoniangeneral}
  && \mathcal{H}(t,X(t),Y(t),\pi(t),A_1(t),A_2(t),B_1(t),B_2(t))  \\ \nonumber
  &=& f(t,X(t),Y(t),\pi(t))+b(t,X(t),Y(t),\pi(t))A_1(t) +\varphi(Y(t))A_2(t) \\ \nonumber
   && + \sigma(t,X(t),Y(t),\pi(t))B_1(t) +\beta(t,X(t),Y(t),\pi(t))B_2(t)+\phi(Y(t))B_3(t)\,,
\end{eqnarray}
 From now on, we assume that the Hamiltonian $\mathcal{H}$ is continuously differentiable w.r.t. $x$ and $y$. Then, the adjoint equations corresponding to the admissible strategy $\pi\in\mathcal{A}$ are given by the following $\{\mathcal{F}_t\}_{t\in[0,T]}$-adapted backward stochastic differential equations (BSDEs)
\begin{eqnarray}\nonumber
  dA_1(t) &=& -\frac{\partial\mathcal{H}}{\partial x}(t,X(t),Y(t),\pi(t),A_1(t),A_2(t),B_1(t),B_2(t))dt  \\ \label{adjointg1}
   && +B_1(t)dW_1(t) +B_2(t)dW_2(t)\,, \\
  A_1(T) &=& \frac{\partial g}{\partial x}(X(T),Y(T))
\end{eqnarray}
and
\begin{eqnarray}\nonumber
  dA_2(t) &=& -\frac{\partial\mathcal{H}}{\partial y}(t,X(t),Y(t),\pi(t),A_1(t),A_2(t),B_1(t),B_2(t))dt  \\ \label{adjointg2}
   && +B_3(t)dW_1(t) +B_4(t)dW_2(t)\,, \\
  A_2(T) &=& \frac{\partial g}{\partial y}(X(T),Y(T))\,.
\end{eqnarray}

The verification theorem associated to our problem is stated as follows:

\begin{theorem}\label{sufficientmp}
$\rm{ (Sufficient \ maximum \ principle)}$
Let $\pi^*\in\mathcal{A}$ with the corresponding wealth process $X^*$. Suppose that the pairs $(A_1^*(t),B_1^*(t),B_2^*(t))$ and $(A_2^*(t),B_3^*(t),B_4^*(t))$ are the solutions of the adjoint equations \eqref{adjointg1} and \eqref{adjointg2}, respectively. Moreover, suppose that the following inequalities hold:
\begin{itemize}
  \item[(i)] The function $(x,y)\rightarrow g(x,y)$ is concave;
  \item[(ii)] The function $\mathcal{H}(t)=\sup_{\pi\in\mathcal{A}}\mathcal{H}(t,X(t),Y(t),\pi,A_1^*(t),A_2^*(t),B_1^*(t),B_2^*(t))$ is concave and
  $$
\mathbb{E}\Bigl[\mathcal{H}(t,X,Y,\pi^*,A_1^*,A_2^*,B_1^*,B_2^*)\mid\mathcal{E}_t\Bigl]\,=\,\sup_{\pi\in\mathcal{A}} \mathbb{E}\Bigl[\mathcal{H}(t,X,Y,\pi,A_1^*,A_2^*,B_1^*,B_2^*)\mid\mathcal{E}_t\Bigl]\,.
$$

\end{itemize}
Furthermore, we assume the following:
$$
\mathbb{E}\Bigl[\int_0^T(X^*(t))^2\Bigl((B_1^*(t))^2+(B_2^*(t))^2\Bigl)dt\Bigl]<\infty\,;
$$
$$
\mathbb{E}\Bigl[\int_0^T(Y(t))^2\Bigl((B_3^*(t))^2+(B_4^*(t))^2\Bigl)dt\Bigl]<\infty\,;
$$

\begin{eqnarray*}
\mathbb{E}\Bigl[\int_0^T\Bigl\{(A_1^*(t))^2\Bigl((\sigma(t,X(t),Y(t),\pi(t)))^2+(\beta(t,X(t),Y(t),\pi(t)))^2\Bigl) && \\
 +(A_2^*(t))^2(\phi(Y(t)))^2\Bigl]dt\Bigl] &<& \infty\,,
\end{eqnarray*}
for all $\pi\in\mathcal{A}$.

Then, $\pi^*\in\mathcal{A}$ is an optimal strategy with the corresponding optimal state process $X^*$.
\end{theorem}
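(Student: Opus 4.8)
The plan is to run the classical (Arrow-type) sufficient maximum principle argument, bringing in the partial-information feature only at the very end. First I would fix an arbitrary admissible $\pi\in\mathcal{A}$ with associated state process $X$, and note that the factor process $Y$ of \eqref{externalgeneral} is the \emph{same} under $\pi$ and under $\pi^*$, since it is not influenced by the control. I would then write
\[
\mathcal{J}(\pi^*)-\mathcal{J}(\pi)=\mathbb{E}\Big[\int_0^T\big(f(t,X^*,Y,\pi^*)-f(t,X,Y,\pi)\big)\,dt+g(X^*(T),Y(T))-g(X(T),Y(T))\Big],
\]
and use the definition \eqref{hamiltoniangeneral} of $\mathcal{H}$ to solve for $f$ and substitute, keeping the \emph{same} adjoint processes $(A_1^*,B_1^*,B_2^*)$ and $(A_2^*,B_3^*,B_4^*)$ for both controls. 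Because $Y$ is common, the terms carrying the coefficients $\varphi(Y)$ and $\phi(Y)$ cancel, and $\mathcal{J}(\pi^*)-\mathcal{J}(\pi)$ becomes an expectation involving the Hamiltonian difference $\mathcal{H}(t,X^*,Y,\pi^*,\cdots)-\mathcal{H}(t,X,Y,\pi,\cdots)$, the ``mismatch'' terms $(b^*-b)A_1^*+(\sigma^*-\sigma)B_1^*+(\beta^*-\beta)B_2^*$ (a starred coefficient meaning evaluation at $(t,X^*,Y,\pi^*)$, an unstarred one at $(t,X,Y,\pi)$), and the terminal difference $g(X^*(T),Y(T))-g(X(T),Y(T))$.

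Next I would apply It\^o's formula to $t\mapsto A_1^*(t)\big(X^*(t)-X(t)\big)$, using the adjoint BSDE \eqref{adjointg1} and the state equation \eqref{stateprocess} for $X^*$ and for $X$, take expectations, and drop the stochastic integrals — this is precisely where the stated integrability hypotheses are invoked. The boundary term at $t=0$ vanishes because $X^*(0)=X(0)=x$, and at $t=T$ one has $A_1^*(T)=\partial_xg(X^*(T),Y(T))$. Combining the resulting identity with the concavity estimate $g(X^*(T),Y(T))-g(X(T),Y(T))\ge\partial_xg(X^*(T),Y(T))\big(X^*(T)-X(T)\big)$ coming from (i) makes all the mismatch terms cancel and leaves
\[
\mathcal{J}(\pi^*)-\mathcal{J}(\pi)\ \ge\ \mathbb{E}\Big[\int_0^T\Big(\mathcal{H}(t,X^*,Y,\pi^*,\cdots)-\mathcal{H}(t,X,Y,\pi,\cdots)-\big(X^*(t)-X(t)\big)\,\frac{\partial\mathcal{H}}{\partial x}(t,X^*,Y,\pi^*,\cdots)\Big)\,dt\Big].
\]

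It then remains to show that this expectation is nonnegative, and this is the step I expect to be the main obstacle. Writing $\widehat{\mathcal{H}}(t,x,y):=\sup_{\pi}\mathcal{H}(t,x,y,\pi,A_1^*(t),A_2^*(t),B_1^*(t),B_2^*(t))$, concavity of $\widehat{\mathcal{H}}$ in $(x,y)$ from (ii) gives, for a.a.\ $t$,
\[
\mathcal{H}(t,X,Y,\pi,\cdots)\le\widehat{\mathcal{H}}(t,X,Y)\le\widehat{\mathcal{H}}(t,X^*,Y)-\partial_x\widehat{\mathcal{H}}(t,X^*,Y)\big(X^*(t)-X(t)\big),
\]
with no $y$-derivative because $Y$ is common. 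In the full-information case one would now use the envelope identities $\widehat{\mathcal{H}}(t,X^*,Y)=\mathcal{H}(t,X^*,Y,\pi^*,\cdots)$ and $\partial_x\widehat{\mathcal{H}}(t,X^*,Y)=\partial_x\mathcal{H}(t,X^*,Y,\pi^*,\cdots)$ and be finished; here, since $\pi^*(t)$ is only a \emph{conditional} maximizer, these identities need not hold pointwise. The fix is to take $\mathbb{E}[\,\cdot\mid\mathcal{E}_t]$ in the displayed estimate and invoke the conditional maximum condition of (ii) — that $\pi^*(t)$ maximizes $\pi\mapsto\mathbb{E}[\mathcal{H}(t,X^*,Y,\pi,\cdots)\mid\mathcal{E}_t]$ over $\mathcal{E}_t$-measurable admissible controls — together with its first-order counterpart, while carefully accounting for the fact that $X^*(t)-X(t)$ and $\partial_x\mathcal{H}(t,X^*,Y,\pi^*,\cdots)$ are not $\mathcal{E}_t$-measurable. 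The tower property then yields $\mathbb{E}\big[\int_0^T(\cdots)\,dt\big]\ge0$, hence $\mathcal{J}(\pi^*)\ge\mathcal{J}(\pi)$; since $\pi$ is arbitrary, $\pi^*$ is optimal. This last manipulation is exactly the refinement of the classical maximum principle required by partial information, and I would carry it out along the lines of Baghery and \O ksendal \cite{baghery2007} and Guambe and Kufakunesu \cite{guambe}; everything else — the It\^o expansion, the bookkeeping of cancellations, and the observation that the second adjoint pair $(A_2^*,B_3^*,B_4^*)$ does not enter the inequality because $Y$ is uncontrolled — is routine given the integrability conditions.
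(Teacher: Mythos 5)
Your proposal is correct and takes essentially the same route as the paper's proof: the decomposition $\mathcal{J}(\pi^*)-\mathcal{J}(\pi)=\mathcal{J}_1+\mathcal{J}_2$, concavity of $g$ plus the integration-by-parts/It\^o argument on $A_1^*(t)\bigl(X^*(t)-X(t)\bigr)$ (with the stated integrability conditions killing the stochastic integrals), the rewriting of $f$ via the Hamiltonian, and finally concavity of the Hamiltonian combined with the conditional maximality of $\pi^*$ and the tower property, exactly in the style of Framstad et al.\ and Baghery--\O ksendal that the paper follows. Your two refinements --- noting that $Y$ is uncontrolled so $Y^*=Y$ and the second adjoint pair $(A_2^*,B_3^*,B_4^*)$ drops out (the paper carries these terms formally before they cancel), and flagging that under partial information the envelope identities fail pointwise so one must pass to $\mathbb{E}[\,\cdot\mid\mathcal{E}_t\,]$ and use the conditional first-order condition (in effect concavity of $\mathcal{H}$ in $(x,\pi)$, which is also what the paper's displayed inequality containing the $(\pi^*-\pi)\frac{\partial\mathcal{H}^*}{\partial\pi}$ term uses) --- are consistent with, not divergent from, the paper's argument.
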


\begin{proof}

Let $\pi\in\mathcal{A}$ be an admissible strategy and $X(t)$ the corresponding wealth process. Then, following Framstad {\it et. al.} \cite{framstard}, Theorem 2.1., we have:

\begin{eqnarray*}
 \mathcal{J}(\pi^*)-\mathcal{J}(\pi)
  &=& \mathbb{E}\Bigl[\int_0^T(f(t,X^*(t),Y^*(t),\pi^*(t))-f(t,X(t),Y(t),\pi(t)))dt \\
  && +(g(X^*(T),Y^*(T))-g(X(T),Y(T))) \Bigl] \\
  &=& \mathcal{J}_1+\mathcal{J}_2\,.
\end{eqnarray*}
By condition $(i)$ and the integration by parts rule (Oksendal and Sulem \cite{Oksendal}, Lemma 3.6.), we have

\begin{eqnarray*}
  \mathcal{J}_2 &=& \mathbb{E}\Bigl[g(X^*(T),Y^*(T))-g(X(T),Y(T))\Bigl] \\
  &\geq& \mathbb{E}\Bigl[(X^*(T)-X(T))A^*_1(T)+(Y^*(T)-Y(T))A^*_2(T)\Bigl]  \\
   &=& \mathbb{E}\Bigl[\int_0^T(X^*(t)-X(t))dA^*_1(t)+\int_0^TA^*_1(t)(dX^*(t)-dX(t)) \\
   && +\int_0^T(Y^*(t)-Y(t))dA^*_2(t)+\int_0^TA^*_2(t)(dY^*(t)-dY(t)) \\
   && +\int_0^T[(\sigma(t,X^*(t),Y^*(t),\pi^*(t))- \sigma(t,X(t),Y(t),\pi(t)))B^*_1(t) \\
  &&  +  (\beta(t,X^*(t),Y^*(t),\pi^*(t))- \sigma(t,X(t),Y(t),\pi(t)))B^*_2(t)]dt \\
   &&  +\int_0^T(\phi(Y^*(t))-\phi(Y(t)))B_3^*(t)dt\Bigl] \\
   &=& \mathbb{E}\Bigl[-\int_0^T(X^*(t)-X(t))\frac{\partial \mathcal{H}^*}{\partial x}(t)dt -\int_0^T(Y^*(t)-Y(t))\frac{\partial \mathcal{H}^*}{\partial y}(t)dt \\
   && + \int_0^T(A_1^*(t)b(t,X^*(t),Y^*(t),\pi^*(t))-b(t,X(t),Y(t),\pi(t)))dt \\
   && + \int_0^T(\varphi(Y^*(t))-\varphi(Y(t)))A_2^*(t)dt +\int_0^T(\phi(Y^*(t))-\phi(Y(t)))B_3^*(t)dt \\
   && +\int_0^T[(\sigma(t,X^*(t),Y^*(t),\pi^*(t))- \sigma(t,X(t),Y(t),\pi(t)))B^*_1(t) \\
  &&  +  (\beta(t,X^*(t),Y^*(t),\pi^*(t))- \sigma(t,X(t),Y(t),\pi(t)))B^*_2(t)]dt\Bigl]\,,
\end{eqnarray*}
where we have used the notation $$\mathcal{H}^*(t)=\mathcal{H}(t,X^*(t),Y^*(t),\pi^*(t),A_1^*(t),A_2^*(t),B_1^*(t),B_2^*(t),B_3^*(t))\,.$$

On the other hand, by definition of $\mathcal{H}$ in \eqref{hamiltoniangeneral}, we see that

\begin{eqnarray*}
 \mathcal{J}_1 &=& \mathbb{E}\Bigl[\int_0^T(f(t,X^*(t),Y^*(t),\pi^*(t))-f(t,X(t),Y(t),\pi(t)))dt\Bigl] \\
   &=& \mathbb{E}\Bigl[\int_0^T [\mathcal{H}(t,X^*(t),Y^*(t),\pi^*(t),A_1^*(t),A_2^*(t),B_1^*(t),B_2^*(t),B_3^*(t))  \\
   && - \mathcal{H}(t,X^*(t),Y^*(t),\pi^*(t),A_1^*(t),A_2^*(t),B_1^*(t),B_2^*(t),B_3^*(t))]dt  \\
   && -\int_0^TA_1^*(t)(A_1^*(t)b(t,X^*(t),Y^*(t),\pi^*(t))-b(t,X(t),Y(t),\pi(t)))dt \\
   && - \int_0^T(\varphi(Y^*(t))-\varphi(Y(t)))A_2^*(t)dt +\int_0^T(\phi(Y^*(t))-\phi(Y(t)))B_3^*(t)dt \\
   && -\int_0^T[(\sigma(t,X^*(t),Y^*(t),\pi^*(t))- \sigma(t,X(t),Y(t),\pi(t)))B^*_1(t) \\
  &&  -  (\beta(t,X^*(t),Y^*(t),\pi^*(t))- \sigma(t,X(t),Y(t),\pi(t)))B^*_2(t)]dt\Bigl]\,.
\end{eqnarray*}
Then, summing the above two expressions, we obtain
\begin{eqnarray*}
 && \mathcal{J}_1+\mathcal{J}_2 \\
  &=&  \mathbb{E}\Bigl[\int_0^T [\mathcal{H}(t,X^*(t),Y^*(t),\pi^*(t),A_1^*(t),A_2^*(t),B_1^*(t),B_2^*(t),B_3^*(t))  \\
   && - \mathcal{H}(t,X(t),Y(t),\pi(t),A_1^*(t),A_2^*(t),B_1^*(t),B_2^*(t),B_3^*(t))]dt \\
   &&  -\int_0^T(X^*(t)-X(t))\frac{\partial \mathcal{H}^*}{\partial x}(t)dt -\int_0^T(Y^*(t)-Y(t))\frac{\partial \mathcal{H}^*}{\partial y}(t)dt\,.
\end{eqnarray*}

By the concavity of $\mathcal{H}$, i.e., conditions $(i)$ and $(ii)$, we have
\begin{eqnarray*}
   && \mathbb{E}\Bigl[\int_0^T [\mathcal{H}(t,X^*(t),Y^*(t),\pi^*(t),A_1^*(t),A_2^*(t),B_1^*(t),B_2^*(t),B_3^*(t))  \\
   && - \mathcal{H}(t,X(t),Y(t),\pi(t),A_1^*(t),A_2^*(t),B_1^*(t),B_2^*(t),B_3^*(t))]dt\mid\mathcal{E}_t\Bigl]  \\
   &\geq&  \mathbb{E}\Bigl[\int_0^T(X^*(t)-X(t))\frac{\partial \mathcal{H}^*}{\partial x}(t)dt +\int_0^T(Y^*(t)-Y(t))\frac{\partial \mathcal{H}^*}{\partial y}(t)dt \\
   && +\int_0^T(\pi^*(t)-\pi(t))\frac{\partial \mathcal{H}^*}{\partial \pi}(t)dt\mid\mathcal{E}_t\Bigl]\,.
\end{eqnarray*}

Then, by the maximality of the strategy $\pi^*\in\{\mathcal{E}_t\}$-measurable and the concavity of the Hamiltonian $\mathcal{H}$,
\begin{eqnarray*}
   && \mathbb{E}\Bigl[\int_0^T [\mathcal{H}(t,X^*(t),Y^*(t),\pi^*(t),A_1^*(t),A_2^*(t),B_1^*(t),B_2^*(t),B_3^*(t))  \\
   && - \mathcal{H}(t,X(t),Y(t),\pi(t),A_1^*(t),A_2^*(t),B_1^*(t),B_2^*(t),B_3^*(t))]dt\Bigl]  \\
   &\geq&  \mathbb{E}\Bigl[\int_0^T(X^*(t)-X(t))\frac{\partial \mathcal{H}^*}{\partial x}(t)dt +\int_0^T(Y^*(t)-Y(t))\frac{\partial \mathcal{H}^*}{\partial y}(t)dt\Bigl]\,.
\end{eqnarray*}
Hence  $\mathcal{J}(\pi^*)-\mathcal{J}(\pi)=\mathcal{J}_1+\mathcal{J}_2\geq0$. Therefore, $\mathcal{J}(\pi^*)\geq \mathcal{J}(\pi)$, for any strategy $\pi\in\mathcal{A}$. Then $\pi^*\in\mathcal{A}$ is optimal.
\end{proof}

\end{document}